\title{Sharp exact recovery threshold for two-community Euclidean random graphs }
\author{Julia Gaudio\thanks{(\url{julia.gaudio@northwestern.edu}) Department of Industrial Engineering and Management Sciences, Northwestern
University} \and Charlie K. Guan\thanks{(\url{charlie.guan@northwestern.edu}) Department of Industrial Engineering and Management Sciences, Northwestern
University}
}
\date{}
\begin{document}

\maketitle

\begin{abstract}
    This paper considers the problem of label recovery in random graphs and matrices. Motivated by transitive behavior in real-world networks (i.e., ``the friend of my friend is my friend''), a recent line of work considers spatially-embedded networks, which exhibit transitive behavior. In particular, the Geometric Hidden Community Model (GHCM), introduced by Gaudio, Guan, Niu, and Wei, models a network as a labeled Poisson point process where every pair of vertices is associated with a pairwise observation whose distribution depends on the labels and positions of the vertices. The GHCM is in turn a generalization of the Geometric SBM (proposed by Baccelli and Sankararaman). Gaudio et al. provided a threshold below which exact recovery is information-theoretically impossible. Above the threshold, they provided a linear-time algorithm that succeeds in exact recovery under a certain ``distinctness-of-distributions'' assumption, which they conjectured to be unnecessary. In this paper, we partially resolve the conjecture by showing that the threshold is indeed tight for the two-community GHCM. We provide a two-phase, linear-time algorithm that explores the spatial graph in a data-driven manner in Phase I to yield an almost exact labeling, which is refined to achieve exact recovery in Phase II. Our results extend achievability to geometric formulations of well-known inference problems, such as the planted dense subgraph problem and submatrix localization, in which the distinctness-of-distributions assumption does not hold.

\end{abstract}

\section{Introduction}
\label{sec:intro}

Many prominent inference problems on graphs and matrices in the statistics, machine learning, and information theory literature aim to recover hidden community structure, such as the Stochastic Block Model (SBM), group synchronization, and the spiked Wigner model. They share the same framework: graph vertices, or equivalently matrix indices, are assigned unobserved community labels, and we observe independent measurements between each pair of vertices. The distribution of these measurements is governed by the community labels of the corresponding vertices. For example, in the SBM, introduced by \cite{Holland1983} to analyze social networks, the measurements are Bernoulli random variables indicating the presence or absence of edges, where the Bernoulli parameter for a pair $(u,v)$ depends on labels of $u$ and $v$. In $\Z_2$ synchronization, the simplest form of group synchronization with two communities, the measurement between two vertices $u$ and $v$ is distributed as
an unit-variance Gaussian with mean $+\mu$ if they are from the same community and mean $-\mu$ if they are from opposite communities.

There is a surging interest in extending these inference problems to include latent geometry, in order to better capture empirical properties of real-world networks. In particular, social networks often display \emph{transitive} behavior, containing many triangles because a given pair of individuals is more likely to be friends if they have a friend in common \citep{rapoport1953spread}. In addition to transitivity considerations, spatial networks also model scenarios with limited information; observations may only be available between nearby vertices, or, by viewing the geometry as a feature embedding, between sufficiently similar data points.
Several spatial random graph models were introduced to enrich the SBM, such as those by \cite{Galhotra2018}, \cite{Sankararaman2018}, and \cite{Avrachenkov2021}.

Recent works have established sharp information-theoretic (IT) thresholds akin to those for the standard SBM. The works of \cite{Sankararaman2018} and \cite{Gaudio2023} together establish the IT threshold for exact recovery in the Geometric SBM. \cite{ghcm} generalized the model further to the Geometric Hidden Community Model (GHCM), which captures a wider variety of inference problems including $\Z_2$ synchronization under the geometric setting. They provided a threshold below which exact recovery is information-theoretically impossible and above which exact recovery is possible under a ``distinctness-of-distributions'' assumption: for every pair of vertices $u,v$ from distinct communities and $w \not \in \{u,v\}$, the distributions of measurements $(u, w)$ and $(v, w)$ must be distinct. This assumption is violated by several well-known inference problems such as submatrix localization and the planted dense subgraph problem. 
In this paper, we complete the picture of IT thresholds for the two-community GHCM by removing the distinctness assumption. We additionally provide a linear-time algorithm to achieve exact recovery.

\section{Model and Main Results}
\label{sec:main}

First, we formally define the GHCM and present existing results on exact recovery. 

\begin{definition}[Geometric Hidden Community Model]\label{def:ghcm}
Let $n\in\mathbb{N}$. Fix $\lambda > 0$, $d \in \mathbb{N}$, and $k\in\N$. Denote $Z\subset\Z$ with $|Z|=k$ as the set of community labels with corresponding prior probabilities $\pi\in\R^{k}$. For each $i,j \in Z$, let $P_{ij}$ be a probability distribution with mass/density function $p_{ij}$ on $\mathbb{R}$ such that $P_{ij} \stackrel{d}{=} P_{ji}$.
 
A graph $G$ is sampled from $\text{GHCM}(\lambda, n, \pi, P, d)$, with observations $\{Y_{uv}\}\subset\R$ over the undirected edges, according to the following steps:
\begin{enumerate}
    \item Generate the locations of vertices via a homogeneous Poisson point process with intensity $\lambda$ in the region $\cS_{d,n} := [-n^{1/d}/2, n^{1/d}/2 ]^d \subset \mathbb{R}^d$. Let $V\subset\cS_{d,n}$ denote the vertex set. \label{sample:step-1}
    \item Independently assign communities by $\pi$, such that the correct label of $u \in V$ is $\mathbb{P}(x^\star(u) = i) = \pi_i$ for $i\in Z$.
    \item Conditioned on locations and community labels, independently sample  pairwise observations $Y_{uv}$. For $u, v \in V$ and $u \neq v$, we have $Y_{uv} \sim P_{x^{\star}(u), x^{\star}(v)}$ if $\|u-v\|\le (\log n)^{1/d}$; otherwise $Y_{uv} = 0$. 
\end{enumerate}
Here $\lVert u-v\rVert$ denotes the toroidal metric to eliminate boundary effects:
$
\lVert u-v\rVert^2 = \sum_{i=1}^d \min\{|u_i-v_i|, n^{1/d}-|u_i-v_i|\}^2$.
\end{definition}

Given $\{Y_{uv}\}$ and the geometric locations of vertices, our goal is to estimate 
$x^\star$ up to some level of accuracy. We are interested in (1) almost exact recovery, where we recover all labels but a vanishing fraction of vertices, and (2) exact recovery, where we recover all labels with high probability. If symmetries are present in $P$ and $\pi$, then estimation is only correct up to a permutation, which is characterized by the following definition.

\begin{definition}[Permissible relabeling] \label{def:permissible} A permutation $\omega\colon Z\to Z$ is a \emph{permissible relabeling} if $\pi_i = \pi_{\omega(i)}$ for any $i\in Z$ and $P_{ij} = P_{\omega(i), \omega(j)}$ for any $i,j\in Z$. Let $\Omega_{\pi, P}$ be the set of permissible relabelings.
\end{definition}

Given an estimator $\widetilde{x}$, define the agreement of $\widetilde{x}$ and $x^\star$ as
\[
A(\widetilde{x}, x^\star) = \frac{1}{|V|} \max_{\omega\in\Omega_{\pi, P}} \sum_{u\in V}\mathds{1}\{\widetilde x(u) = \omega\circ x^\star(u)\},
\] 
and define
  \begin{itemize}
      \item \emph{Exact recovery:} $\lim\limits_{n \to \infty} \bP(A(\widetilde{x}, x^\star)=1) = 1$,
      \item \emph{Almost exact recovery:} $\lim\limits_{n \to \infty} \bP(A(\widetilde{x}, x^\star)\ge1-\epsilon) = 1$, for all $\epsilon>0$.
  \end{itemize}

\cite{ghcm} identified the condition below which exact recovery is information-theoretically impossible, using the Chernoff-Hellinger (CH) divergence measure on the edge observations. Let $p=(p_1, \hdots, p_k), q=(q_1, \hdots, q_k)$ be vectors of distributions associated with a vector of prior probabilities $\pi=(\pi_1, \hdots, \pi_k)$. The CH divergence between $p$ and $q$ is 
$D_+(p,q) = 1 - \inf_{t \in [0,1]} \sum_{i=1}^k \pi_i \sum_{x \in \mathcal{X}} p_i(x)^t q_i(x)^{1-t},$
if $p, q$ are discrete. In the continuous case, the inner summation is replaced by an integral.

\cite{ghcm} showed that any estimator fails to achieve exact recovery for $G\sim \text{GHCM}(\lambda, n, \pi, P, d)$ if $\lambda \nu_d \min_{i \neq j}  D_+(\theta_i, \theta_j) < 1$, where $\nu_d$ is the volume of the unit ball in $\R^d$ and $\theta_i = (p_{i1}, \cdots, p_{ik})$  associated with $\pi$. 
When $\lambda \nu_d \min_{i \neq j}  D_+(\theta_i, \theta_j) > 1$,
\cite{ghcm} provided a linear-time algorithm that achieves exact recovery under two assumptions (and hence establishes $\lambda \nu_d \min_{i \neq j}  D_+(\theta_i, \theta_j) = 1$ as the IT threshold under these assumptions):
\begin{enumerate}
    \item The log-likelihood ratio $\log(p_{ij}(\cdot) / p_{ab}(\cdot))$ is bounded.
    \item Distinctness: $P_{ir} \neq P_{is}$ for $i, r\neq s\in Z$.
\end{enumerate}
\cite{ghcm} proposed a two-phase algorithm for exact recovery: Phase I constructs a preliminary labeling which achieves almost exact recovery, and Phase II refines the preliminary labeling to achieve exact recovery. Phase I relies on a construction, the \textit{visibility graph}, made by partitioning the torus into small blocks $\{B_i\}$. The visibility graph is a meta-graph, in which the nodes correspond to $\{B_i\}$ and an edge is formed between two blocks, denoted as $B_i \sim B_j$, if $\lVert x - y \rVert \leq (\log n)^{1/d}$ for all $x\in B_i, y\in B_j$. Above the IT threshold, \cite{ghcm} showed the visibility graph is connected with high probability. Phase I exploits this property by constructing a minimal spanning tree on the visibility graph. It labels the vertices within the root node of the tree using the \textit{maximum a posteriori} (MAP) estimator. Then, using a propagation schedule on the tree (e.g., a depth-first search), Phase I labels each vertex of a block using its edge observations to the vertices in the preceding block. 
\cite{ghcm} showed this labeling yielded a sufficiently low error probability and achieved almost exact recovery. Phase II mimicked the $\textit{genie-aided estimator}$, which labels each vertex via MAP estimation using the correct labels of all visible vertices. Since the correct labels are not available, \cite{ghcm} replaced them with the Phase I labels and showed such refining achieves exact recovery. 

While the first assumption bounding the log-likelihood ratio can be relaxed in some cases by modifying the proof technique (e.g., for Gaussian $P_{ij}$'s, see \cite[Appendix G]{ghcm}), the second assumption of distinct distributions leaves the precise characterization of the IT threshold unclear in some prominent inference problems under the geometric setting. For example, consider the planted dense subgraph (PDS) problem, which aims to recover a subgraph $C^\star$ in a random graph where edges within 
$C^\star$ are formed with probability $p$, and edges between vertices outside or crossing $C^\star$ are formed with probability $q<p.$ The geometric extension of PDS can be formulated as a GHCM, with $k=2$, $Z=\{1, 2\}$, and $P$ such that $P_{11} = \text{Bern}(p)$ and $P_{12} = P_{21} = P_{22} = \text{Bern}(q)$, violating the distinctness assumption.
Since the tree search order of \cite{ghcm} is determined by the number of vertices in each block but not their labels, there may be blocks which do not contain any vertices from $C^{\star}$, and thus fail to label their children. Another model violating the distinctness assumption is submatrix localization (SL), which aims to recover the principal submatrix whose entries are drawn from a Gaussian with elevated mean in a matrix whose remaining entries are zero-mean Gaussians. 
The geometric SL is a GHCM, with $k=2$, $Z=\{1, 2\}$, and $P$ such that $P_{11} = \cN(\mu, 1)$ and $P_{12} = P_{21} = P_{22} = \cN(0, 1)$ for $\mu > 0.$

In this paper, we show that $\lambda \nu_d \min_{i \neq j}  D_+(\theta_i, \theta_j) = 1$ is indeed the IT threshold for exact recovery in the two-community GHCM, by showing that above the threshold, exact recovery is possible even without the distinctness assumption.
In the rest of the paper, we set $k=2$, $Z=\{1, 2\}$, and $P$ such that $P_{11} \neq P_{12}, P_{21} = P_{22}$. We let $C^\star = \{u\in V: x^\star(u)=1\}$. Our main result establishes the sharpness of the IT threshold.
\begin{theorem}
\label{thm:main_exact_recovery}
Let $G\sim \text{GHCM}(\lambda, n, \pi, P, d)$, $k=2, d\geq 2$, such that (i) $\log(p_{12}(y_{u_{ij}v})/p_{11}(y_{u_{ij}v}))<\eta$, or (ii) $P_{ij}$ are all Gaussians. Then, there exists an efficient algorithm that achieves exact recovery when $\lambda \nu_d \min_{i \neq j}  D_+(\theta_i, \theta_j) > 1$ and almost exact recovery when $\pi_1\lambda \nu_d > 1$. 
\end{theorem}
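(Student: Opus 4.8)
The plan is to follow the two-phase template of \cite{ghcm} but to make the Phase~I exploration \emph{label-aware}, so that propagation is carried exclusively along vertices identified as belonging to $C^\star$. The starting point is a pair of geometric facts that hold exactly at the stated thresholds. The community-$1$ vertices form a homogeneous Poisson process of intensity $\pi_1\lambda$, and the number of them in any visibility ball is $\mathrm{Poisson}(\pi_1\lambda\nu_d\log n)$; since $\pi_1\lambda\nu_d>1$, a union bound over the $\Theta(n)$ vertices shows that \emph{every} ball contains $\Theta(\log n)$ members of $C^\star$ with high probability, and in particular the visibility subgraph restricted to $C^\star$ is connected. The role of the distinctness assumption in \cite{ghcm} was to guarantee that any labeled neighbor is informative; here the point is that, because $P_{21}=P_{22}$, a community-$2$ neighbor $u$ of $v$ produces $Y_{uv}\sim P_{22}$ regardless of $x^\star(v)$ and is uninformative, whereas a community-$1$ neighbor yields $Y_{uv}\sim P_{1,x^\star(v)}$ with $P_{11}\neq P_{12}$ and is the \emph{only} source of signal. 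This is precisely why labeling must propagate through $C^\star$.

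For Phase~I I would first produce a reliable seed by running the MAP estimator jointly on the vertices of an initial region; because $P_{11}\neq P_{12}$ the community-$1$ vertices remain statistically detectable (this breaks any label-swap symmetry, so no permissible relabeling ambiguity arises), and on a region containing sufficiently many members of $C^\star$ the MAP labeling is correct with high probability. I would then grow a thick labeled front: a vertex $w$ is labeled only once a constant fraction of its visibility ball already lies in the labeled region, which guarantees a committee of $\Theta(\log n)$ confirmed community-$1$ vertices visible to $w$; the label of $w$ is then decided by thresholding $\sum_{u}\log\bigl(p_{11}(Y_{uw})/p_{12}(Y_{uw})\bigr)$ over this committee, and every newly confirmed member of $C^\star$ extends the front. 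Because each decision uses $\Theta(\log n)$ committee members, the per-vertex error probability is $n^{-c}$ for a constant $c>0$ (controlled under~(i) by a Chernoff bound, or under~(ii) by the Gaussian tail estimates of \cite[Appendix~G]{ghcm}); a committee of size $\Theta(\log n)$ thus has $o(1)$ expected wrong members, a single misclassification cannot flip later decisions, and summing $n^{-c}$ over all vertices gives $o(n)$ mistakes, hence almost exact recovery.

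I expect the \emph{main obstacle} to be controlling this Phase~I exploration without distinctness---exactly the step where \cite{ghcm} invoked the assumption. Two failure modes must be ruled out: a false negative (a true member of $C^\star$ declared community~$2$) removes a vertex from the front and could disconnect a region, while a false positive (a community-$2$ vertex admitted to $C^\star$) injects a term $Y_{uw}\sim P_{12}$ into later committees, biasing them. I would handle both by an induction showing that the set of \emph{correctly} labeled community-$1$ vertices stays connected in the visibility metric and keeps every uncovered ball equipped with $\Theta(\log n)$ correct committee members: since false negatives occur with probability $n^{-c}$, the confirmed set is $C^\star$ minus a sparse random subset, whose removal leaves the supercritical $C^\star$ subgraph connected, and the few false positives perturb each committee's aggregated likelihood ratio by only $o(\log n)$. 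Together these guarantee the front never stalls and errors do not cascade.

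For Phase~II I would mimic the genie-aided estimator. First, the genie---relabeling each $v$ by MAP from the \emph{true} labels of its visible neighbors---succeeds everywhere with high probability: the per-vertex error exponent is $\lambda\nu_d D_+(\theta_1,\theta_2)=\lambda\nu_d\pi_1\bigl(1-\inf_{t}\int p_{11}^{\,t}p_{12}^{\,1-t}\bigr)$, which exceeds $1$ by hypothesis (and in particular forces $\pi_1\lambda\nu_d>1$, so Phase~I applies), whence a union bound over $\Theta(n)$ vertices gives vanishing total error. I would then replace the true labels by the Phase~I labels and argue the two estimators agree at every vertex. Each genie decision has a margin of order $\log n$, and Phase~I errs on an $o(1)$ fraction of vertices, so flipping the decision at a fixed $v$ requires an atypical concentration of Phase~I mistakes among its $\Theta(\log n)$ neighbors; a tail bound on this event, summed over all $v$, is $o(1)$, yielding exact recovery. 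The delicate point specific to the non-distinct setting is that the two kinds of Phase~I error perturb the aggregated log-likelihood ratio asymmetrically---dropping a genuine $P_{11}$ signal term versus adding a spurious $P_{12}$ term---so the margin argument must be run against the perturbed sum, which is where the quantitative $o(n)$ bound from Phase~I, rather than a mere vanishing fraction, is needed.
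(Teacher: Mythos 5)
Your high-level plan coincides with the paper's: seed with MAP on an initial region, propagate labels only through vertices identified as $C^\star$ (exploiting that $P_{21}=P_{22}$ makes community-$2$ neighbors uninformative), rely on connectivity of the $C^\star$ structure at intensity $\pi_1\lambda$ with $\pi_1\lambda\nu_d>1$, and finish with genie-aided refinement. However, the step you yourself flag as the main obstacle --- controlling error accumulation during the label-aware exploration --- is asserted rather than proved, and the assertions are not quite the right ones. First, your exploration is a continuum ``front'' grown vertex by vertex, so the committee attached to a vertex $w$ is a data-dependent region; the claim that a constant fraction of $w$'s ball lying in the labeled region guarantees a $\Theta(\log n)$ committee of \emph{correct} community-$1$ vertices requires a union bound over such regions and a conditional-independence structure that you do not supply. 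The paper avoids this by discretizing into blocks of volume $\chi\log n$, exploring the block-level $C^\star$-visibility graph, and conditioning on the count vector $Z$; the exploration order then depends only on per-block counts, and the per-vertex errors within a block are conditionally independent given the composition of the parent block's estimated $C^\star$ set.

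Second, and more importantly, the quantitative form of the cascade control matters. The misclassification probability of a single vertex whose committee contains $m$ wrong members scales like $\phi^{\delta\log n/2-m}e^{\eta m}$ with $\phi=\sup_t\int p_{11}^t p_{12}^{1-t}<1$: each wrong committee member costs a multiplicative factor $e^{\eta}/\phi$. To keep the per-vertex error rate polynomially small one needs the number of wrong members per committee to be $O(1)$, not merely $o(\log n)$ --- an $o(\log n)$ perturbation with the wrong constant relative to $\eta$ and $\log(1/\phi)$ can destroy the bound. The paper fixes a constant $M=5/(4b)$, shows by a Chernoff bound that each explored block exceeds $M$ errors with probability $O(n^{-9/8})$ (summable over the $n/(\chi\log n)$ blocks), and notes that $\delta\log n-M\ge \delta\log n/2$ keeps every $C^\star$-occupied block active so the exploration never stalls; your alternative claim that ``$C^\star$ minus a sparse random subset stays connected'' is not the relevant statement (the removed set is not independent of the exploration) and is not established. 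Finally, for Phase II the paper invokes \cite[Theorem E.3]{ghcm}, which needs a \emph{uniform per-block} bound of at most $\epsilon\log n$ Phase I errors; your argument uses only a global $o(n)$ error count and a tail bound that implicitly treats Phase I mistakes as independent across neighbors, which they are not. The uniform local bound is exactly what the block-level analysis delivers and what your sketch is missing.
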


A key observation is that the refinement from almost exact recovery to exact recovery of \cite{ghcm} succeeds without the distinctness assumption. Therefore, our main contribution is an almost exact labeling procedure that yields exact recovery when combined with the refinement step for any $2$-community GHCM. The algorithm explores the visibility graph in a data-driven manner using the edge observations, in contrast with the previous approach which fixes the block visitation order \textit{a priori} of observing the edges.

\section{Exact Recovery Algorithm for the general 2-community GHCM}
\label{sec:algo}

The algorithm is a two-phase approach that yields an almost exact labeling $\widehat x$ in Phase I and an exact labeling $\widetilde x$ in Phase II (Algorithm \ref{alg:full}). The runtime is $O(n\log n)$, which is linear in the number of edges of $G$. Denote $\widehat x_S$ ($\widetilde x_S$) as the subset of $\widehat x$ ($\widetilde x)$ restricted to $S \subset V.$ For $u, v\in V$, denote $u \sim v$ if $\lVert u - v\rVert \leq (\log n)^{1/d}$, i.e., $u$ and $v$ are mutually visible and the edge observation $Y_{uv}$ is available.

\subsection{Phase I}
The algorithm first partitions $\cS_{d, n}$ into disjoint blocks of volume $\chi \log n$ (Line \ref{line:partition}), where $\chi >0$ is a sufficiently small constant precisely defined in Section \ref{sec:phaseI}. Denote $V(B)$ as the set of vertices in a block $B$. The algorithm considers blocks that contain sufficiently many vertices, formally defined below.
\begin{definition}
      Given $\delta>0$, a block $B \subset \cS_{d,n}$ is \emph{$\delta$-occupied} if $|V(B)| > \delta\log n$. Otherwise, $B$ is $\delta$-unoccupied.
\end{definition}
We say $B$ is occupied if $\delta$ is clear. For sufficiently small $\delta$, all blocks are occupied except a negligible fraction. Therefore, the algorithm can only consider the occupied blocks to achieve almost exact recovery. 

The algorithm begins labeling by selecting an arbitrary occupied block $B_1$ and conducting MAP estimation on a subset of the block's vertices $V_0$ (Line \ref{line:MAP}). We only label a subset to maintain linear runtime. By Theorem D.2 of \cite{ghcm}, MAP achieves exact recovery on $V_0$. Then, the algorithm propagates the labeling using Algorithm \ref{alg:propagation}, which takes a labeled set of vertices and labels another set of vertices by maximizing the likelihood of the edge observations between the two sets. The first propagation from $V_0$ to the remaining vertices of $B_1$ yields a completely labeled block (Line \ref{line:prop_V1}). 

The algorithm continues label propagation by labeling neighboring blocks in the visibility graph 
in a data-driven manner that depends on the occupancy of $\cstar$ nodes (Lines \ref{line:prop_begin}-\ref{line:prop_end}).
\begin{definition}
    \label{def:occupancy_cstar} A block is \emph{$\cstar$-occupied} if the block contains at least $\delta \log n$ vertices from $C^\star.$ The block is \emph{$\cstar$-identified} if it contains at least $\delta \log n /2$ vertices from $C^\star.$
\end{definition}
Moreover, we say that a block is ``active'' if it has been labeled, but not propagated from. Given an active block $B_i$, let $N(B_i)$ denote the set of unlabeled, $\cstar$-occupied blocks that are visible to $B_i.$ The algorithm propagates from $B_i$ to all blocks in $N(B_i)$. Afterwards, $B_i$ is considered ``explored,'' and the algorithm moves to another active block to continue propagation until there are no more active blocks.

The success of the propagation procedure hinges on the fact that all $C^{\star}$-occupied blocks are visited.  To understand why propagation should have this property, consider the \emph{vertex visibility graph} on $C^{\star}$, in which the nodes are vertices of the GHCM restricted to $C^{\star}$ and edges are drawn between every pair of vertices within distance $(\log n)^{1/d}$. When the model parameters are above the IT threshold, the result of \cite{Penrose1997} implies that the vertex visibility graph on $\cstar$ is connected with high probability. It turns out that a coarsening of the vertex visibility graph on $C^{\star}$, which we refer to as simply as the $C^{\star}$-\emph{visibility graph}, is also connected with high probability. Formally, a visibility graph is defined as follows.

  \begin{definition}[Visibility graph]
  Consider a Poisson point process $V\subset\cS_{d,n}$, the $(\chi\log n)$-block partition of $\cS_{d,n}$, $\{B_i\}_{i=1}^{n/(\chi\log n)}$, corresponding vertex sets $\{V_i\}_{i=1}^{n/(\chi\log n)}$, and a constant $\delta>0$. The \emph{$(\chi, \delta)$-visibility graph} is denoted by $H = (V^{\dagger}, E^{\dagger})$, where the vertex set $V^{\dagger} = \{i \in [n/(\chi\log n)] : |V_i| \geq \delta \log n\}$ consists of all $\delta$-occupied blocks and the edge set is given by $E^{\dagger}=\{\{i,j\}\colon i,j \in V^{\dagger}, B_i \sim B_j\}$.
  \end{definition}
We drop $(\chi, \delta)$ when these constants are evident and further define the \emph{$\cstar$-visiblity graph} as the visibility graph formed by vertices restricted to $\cstar$.
\begin{proposition}
    \label{prop:cstar}
    Let $\chi, \delta$ satisfy \eqref{eq:chi-formula} and \eqref{eq:delta-formula}. When $\lambda \nu_d \min_{i \neq j}  D_+(\theta_i, \theta_j) > 1$, the $\cstar$-visibility graph is connected with high probability.
\end{proposition}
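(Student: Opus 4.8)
The plan is to transfer connectivity from the vertex level to the block level. First I would note that the vertices of $C^\star$ form a homogeneous Poisson point process of intensity $\lambda\pi_1$ on the volume-$n$ torus $\cS_{d,n}$, obtained by independently thinning $V$, and that the hypothesis already pins down this intensity. With $k=2$ and $P_{21}=P_{22}$ one computes $D_+(\theta_1,\theta_2)=\pi_1\bigl(1-\inf_{t\in[0,1]}\int p_{11}^{t}p_{12}^{1-t}\bigr)$, and since $P_{11}\neq P_{12}$ the infimum (the Chernoff coefficient of $p_{11},p_{12}$) is strictly below $1$; hence $0<\min_{i\neq j}D_+(\theta_i,\theta_j)=D_+(\theta_1,\theta_2)\le\pi_1$ (both ordered pairs give the same value by symmetry). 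Therefore $\lambda\nu_d\min_{i\neq j}D_+(\theta_i,\theta_j)>1$ forces $\lambda\pi_1\nu_d>1$, i.e.\ the expected number of $C^\star$-points in a ball of radius $(\log n)^{1/d}$ exceeds $\log n$; equivalently the expected number of isolated $C^\star$-points, $\lambda\pi_1 n^{\,1-\lambda\pi_1\nu_d}$, tends to $0$. By the random-geometric-graph connectivity theorem of \cite{Penrose1997}, I would conclude that the vertex visibility graph on $C^\star$ is connected with high probability. Because the inequality $\lambda\pi_1\nu_d>1$ is strict, I would in fact run this at the slightly reduced radius $r'=(\log n)^{1/d}-2\sqrt d(\chi\log n)^{1/d}$: for $\chi$ small the reduction multiplies the mean ball-degree by $(1-2\sqrt d\chi^{1/d})^d$, still keeping it above $\log n$, so connectivity at radius $r'$ persists whp.

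To coarsen, I would contract each block to a node. The reduced radius is chosen so that a single triangle inequality does the geometric work: if $u,v\in C^\star$ with $\|u-v\|\le r'$ lie in blocks $B_i,B_j$, then for all $x\in B_i,\,y\in B_j$, $\|x-y\|\le \mathrm{diam}(B_i)+\|u-v\|+\mathrm{diam}(B_j)\le 2\sqrt d(\chi\log n)^{1/d}+r'=(\log n)^{1/d}$, so $B_i\sim B_j$. Thus every edge of the reduced vertex visibility graph whose two endpoints sit in $C^\star$-occupied blocks projects to an edge of the $C^\star$-visibility graph, and the projection of the (connected) reduced vertex graph would yield connectivity of the $C^\star$-visibility graph --- provided every block through which a vertex path passes is itself $C^\star$-occupied.

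The main obstacle is exactly this proviso: a $C^\star$-vertex may lie in a block that holds fewer than $\delta\log n$ community-$1$ vertices and so is not a node of the $C^\star$-visibility graph, so the projected path can leave the intended node set. I would dispatch this with a percolation estimate on the block grid. The $C^\star$-counts of distinct blocks are independent $\mathrm{Poisson}(\lambda\pi_1\chi\log n)$ variables, so a block is non-$C^\star$-occupied independently with probability $n^{-c}$ for some constant $c>0$ (with $\delta$ taken small relative to $\lambda\pi_1\chi$, as in the proposition's constraints). Since block visibility connects each block to $\Theta(\nu_d/\chi)$ others and the product $c\cdot\Theta(\nu_d/\chi)$ approaches $\lambda\pi_1\nu_d>1$ as $\chi,\delta\to0$, a Peierls/contour argument on the grid shows that whp the non-occupied blocks neither surround an occupied block nor form a separating barrier. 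This is where $d\ge2$ is essential: any such separating set of non-occupied blocks has polynomially large size, making its probability super-polynomially small. Given this, any vertex path can be rerouted so that its projected image stays among $C^\star$-occupied blocks, and connectivity transfers. I expect this coarsening/percolation step --- together with the joint calibration of $\chi$ and $\delta$ needed to push both the reduced radius $r'$ and the product $c\cdot\Theta(\nu_d/\chi)$ past their thresholds --- to be the technical heart of the argument; extracting $\lambda\pi_1\nu_d>1$ and invoking Penrose is comparatively immediate.
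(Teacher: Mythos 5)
Your opening step --- using $P_{21}=P_{22}$ to collapse the CH divergence to $D_+(\theta_1,\theta_2)=\pi_1\bigl(1-\inf_{t\in[0,1]}\int p_{11}^t p_{12}^{1-t}\bigr)\le \pi_1$ and thereby extract $\pi_1\lambda\nu_d>1$ --- is exactly the first half of the paper's proof. The paper then essentially stops: it notes that $C^\star$ is an independently thinned Poisson process of intensity $\pi_1\lambda$ and invokes, as a black box, the result of \cite{ghcm} (Appendix C) that the $(\chi,\delta)$-visibility graph of a rate-$\lambda'$ process is connected w.h.p.\ whenever $\lambda'\nu_d>1$, with $\lambda'=\pi_1\lambda$ and with $\chi,\delta$ chosen as in \eqref{eq:chi-formula}--\eqref{eq:delta-formula}. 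You instead set out to re-derive that lemma via Penrose, a coarsening step, and a Peierls argument. That is a legitimate alternative in principle, but as written it has a genuine gap.

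The gap is in the contour estimate. If the occupied-block graph is disconnected, the binding failure mode is a \emph{single} $C^\star$-occupied block all of whose visibility-neighbors are unoccupied. That separating set has constant size $\Theta(\nu_d/\chi)$, not polynomially large size, so its probability is only polynomially small --- roughly $n^{-\lambda\pi_1\nu_d(1-o(1))}$, where the $o(1)$ must be controlled through $\chi$ and $\delta$ --- and the union bound over the $\Theta(n/\log n)$ blocks closes only because $\lambda\pi_1\nu_d>1$ strictly. Your assertion that ``any such separating set of non-occupied blocks has polynomially large size, making its probability super-polynomially small'' skips precisely this critical case, and it is this case, not the large barriers, that determines the threshold (and is the real reason the constants \eqref{eq:chi-formula}--\eqref{eq:delta-formula} look the way they do). Relatedly, your architecture is either redundant or circular: if the Peierls step truly shows that unoccupied blocks never separate occupied ones, it proves connectivity of the $C^\star$-visibility graph outright and the Penrose/rerouting layer is unnecessary; if it shows something weaker, then ``rerouting the vertex path through occupied blocks'' presupposes the connectivity you are trying to establish, since a vertex path can legitimately traverse blocks containing between $1$ and $\delta\log n$ vertices of $C^\star$, which are nonempty yet unoccupied. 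The clean repair is the paper's route: reduce to the already-proved block-connectivity lemma applied to the thinned process of intensity $\pi_1\lambda$, or, if you insist on a self-contained proof, carry out the isolated-block computation explicitly and show it dominates the contour sum.
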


Let $H$ be the $C^{\star}$-visibility graph. Although $H$ is unknown, we can discover its vertices through a graph exploration, which explores blocks that are estimated to contain at least $\delta \log n/2$ vertices in $C^{\star}$. Since the estimation makes few errors with high probability, we show that the exploration process finds all vertices in $H$ with high probability. After propagation, the algorithm labels all remaining vertices as not in $\cstar$ (Line \ref{line:remain}). The labeling is default since the exploration process misses only the blocks outside of $H$, and hence only misses blocks which contain at most $\delta \log n$ vertices from $C^{\star}$.

Under the same conditions as Proposition \ref{prop:cstar}, Phase I achieves almost exact recovery.
\begin{theorem}
    When $ \lambda \nu_d \min_{i \neq j}  D_+(\theta_i, \theta_j) > 1$, Phase I of Algorithm \ref{alg:propagation} achieves almost exact recovery. \label{thm:almost-exact}
\end{theorem}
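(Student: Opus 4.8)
The plan is to bound the number of vertices that Phase~I mislabels and to show this number is $o(|V|)$ with high probability, after which almost exact recovery follows from Markov's inequality (note that for $k=2$ with $P_{11}\ne P_{12}=P_{22}$ the only permissible relabeling is the identity, so no permutation ambiguity arises). I would first isolate the structural feature that makes the two-community case tractable: since $P_{21}=P_{22}$, a community-$2$ source vertex produces edge observations whose law is independent of its neighbor's label, so when Algorithm~\ref{alg:propagation} labels a vertex $v$ from a previously labeled block $B_i$, the likelihood test reduces to the log-likelihood ratio $\sum_{u\in B_i:\ \widehat x(u)=1}\log\frac{p_{11}(Y_{uv})}{p_{12}(Y_{uv})}$; only the vertices of $B_i$ labeled as belonging to $C^\star$ carry information about $x^\star(v)$. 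This is precisely why the exploration is organized around $C^\star$-occupancy, and it reduces every per-vertex decision to a binary hypothesis test between $P_{11}$ and $P_{12}$ driven by at least $\delta\log n$ informative samples.

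Next I would dispose of the vertices that never receive a propagated label. Choosing $\chi,\delta$ as in \eqref{eq:chi-formula}--\eqref{eq:delta-formula} so that $\delta<\pi_1\lambda\chi$, the number of $C^\star$ vertices in a block is $\mathrm{Poisson}(\pi_1\lambda\chi\log n)$, so a Poisson tail bound shows that the probability a block is $C^\star$-unoccupied is $o(1)$. Hence the expected number of $C^\star$ vertices lying in $C^\star$-unoccupied blocks is at most $\frac{n}{\chi\log n}\cdot\delta\log n\cdot o(1)=o(n)$, and the same estimate controls the $\delta$-unoccupied blocks. Every such vertex is assigned the default label $2$ (Line~\ref{line:remain}); this is correct for the community-$2$ vertices and misclassifies only the $o(n)=o(|V|)$ community-$1$ vertices just bounded, so the default rule contributes a negligible fraction of errors.

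I would then condition on the event of Proposition~\ref{prop:cstar} that the $C^\star$-visibility graph $H$ is connected, and argue by induction along the exploration that every block of $H$ is discovered. The base case is the root block $B_1$, labeled exactly with high probability by MAP on $V_0$ (by \cite{ghcm}) and then completed by the first propagation. The inductive hypothesis is that each already-explored block carries fewer than $\delta\log n/2$ errors; under this hypothesis, when the algorithm propagates into a genuinely $C^\star$-occupied neighbor (which has at least $\delta\log n$ true $C^\star$ vertices), the number of mislabeled vertices there stays below $\delta\log n/2$, so the block is correctly flagged as $C^\star$-identified and joins the frontier. The slack between the thresholds $\delta\log n$ (occupied) and $\delta\log n/2$ (identified) is exactly what absorbs the propagation errors, and connectivity of $H$ guarantees the frontier never dies before all of $H$ is covered. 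To make the induction uniform I would establish, by a union bound over the $O(n/\log n)$ blocks, a single good event on which every block incurs fewer than $\delta\log n/2$ errors; since each per-vertex error has probability at most $n^{-\Theta(\delta)}$ and a block has $O(\log n)$ vertices, a binomial tail bound renders the per-block failure probability super-polynomially small, so the union bound closes.

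The quantitative heart is the per-vertex error bound: conditioned on a correctly seeded, few-error source block, the displayed test is a likelihood-ratio test between $P_{11}$ and $P_{12}$ with at least $\delta\log n$ clean samples, and a Chernoff bound gives misclassification probability $\exp(-\Theta(\delta\log n))=n^{-\Theta(\delta)}=o(1)$; summing over the $|V|$ vertices yields $o(|V|)$ expected errors and the theorem follows. The main obstacle is the dependency created by the data-driven exploration: the identity and labels of the source block are themselves random and correlated with the very observations used downstream, so errors can in principle cascade. I expect the crux to be controlling this cascade --- showing that the handful of mislabeled source vertices inject only $O(1)$ spurious terms into each log-likelihood ratio (the bounded-LLR or Gaussian hypotheses of the theorem prevent any single term from dominating), and that the good events above hold simultaneously for all blocks so that the inductive labeling guarantee is never violated regardless of the order in which the exploration visits blocks.
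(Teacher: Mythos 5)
Your overall architecture matches the paper's: seed the root block by MAP, propagate along the data-driven exploration, control per-vertex errors by a Chernoff bound on the likelihood-ratio test between $P_{11}$ and $P_{12}$, use connectivity of the $C^\star$-visibility graph (Proposition \ref{prop:cstar}) to guarantee coverage, and assign the default label $2$ to everything else. However, there is a genuine gap in the quantitative error control that drives the induction. You take as inductive hypothesis that each explored block carries fewer than $\delta\log n/2$ errors, and you justify closing the induction by a union bound with ``super-polynomially small'' per-block failure probabilities. That threshold is enough to guarantee that genuinely $C^\star$-occupied blocks are flagged and explored, but it is \emph{not} enough to control the downstream per-vertex error: a source block with $\Theta(\log n)$ vertices mislabeled as community $1$ injects $\Theta(\log n)$ spurious terms into the log-likelihood ratio, each bounded only by $\eta$, so the natural bound on the misclassification probability becomes $\phi^{\Theta(\log n)}e^{\eta\cdot\Theta(\log n)} = n^{\Theta(\eta+\log\phi)}$, which need not vanish unless $\eta < \log(1/\phi)$ --- a condition not implied by the hypotheses. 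You do flag exactly this issue at the end (``inject only $O(1)$ spurious terms''), but the $O(1)$ requirement is incompatible with the $\delta\log n/2$ inductive hypothesis you actually state, and achieving $O(1)$ errors per block changes the probability calculus: the per-block failure probability for exceeding a \emph{constant} number $M$ of errors is only polynomially small, roughly $(\log n)^M n^{-bM}$, so $M$ must be tuned (the paper takes $M=5/(4b)$, giving per-block failure $O(n^{-9/8})$) so that the union over the $n/(\chi\log n)$ blocks still vanishes. This tuning, together with the careful conditioning on the labeling history $A_0,\dots,A_{i-1}$ and the count vector $Z$ that renders the per-vertex errors within a block conditionally independent, is the substantive content of the paper's proof (Lemma \ref{lem:misclass_rate} and the ensuing binomial domination argument), and it is missing from your proposal rather than merely deferred.

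Two smaller inaccuracies: the number of \emph{clean} informative samples available when labeling a vertex is only guaranteed to be $\delta\log n/2 - M$ (active blocks are guaranteed to have $\ge\delta\log n/2$ vertices labeled $1$, not $\ge\delta\log n$ true $C^\star$ vertices), and the final error count in the paper is the deterministic bound $\delta n/\chi$ on the good event (with $\delta$ taken arbitrarily small), rather than an expectation-plus-Markov argument; your version of the latter is workable but you would still need to verify that the fraction can be driven below every fixed $\epsilon>0$.
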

We prove Proposition \ref{prop:cstar} and Theorem \ref{thm:almost-exact} in Section \ref{sec:phaseI}.

\subsection{Phase II}
Phase II is a refinement phase that improves the labeling of Phase I by mimicking the genie-aided estimator. Given the correct labeling of all vertices visible to a vertex $u$, the genie-aided estimator is Bayes optimal because it maximizes the likelihood of $Y_{uv}$:
\[\widetilde{x}_{\text{genie}}(u) = \argmax_{i\in \{1, 2\}}\sum_{v\in V\setminus\{u\}, v\sim u} \log p_{i, x^\star(v)} (y_{uv}). \]
However, as $x^\star$ is not available, the algorithm replaces $x^\star$ with $\widehat x$ (Line \ref{line:refine}). The refinement procedure is identical to that of \cite{ghcm}[Theorem E.3], which assumes only that the model parameters are above the IT threshold and the preliminary labeling is such that all blocks have at most $\epsilon \log n$ errors for $\epsilon > 0$ sufficiently small. Crucially, the existing refinement procedure does not require the distinctness assumption. 
Therefore, given that our Phase I produces an almost exact labeling, Algorithm \ref{alg:full} achieves exact recovery.

 \begin{algorithm}
    \caption{\texttt{Propagate}} \label{alg:propagation}
    \begin{algorithmic}[1]
    \Require{ Graph $G=(V, E)$, mutually visible vertex sets $T, T' \subset V$, $T \cap T' = \emptyset$, and $\widehat{x}_T\colon T\to \{1, 2\}$.}
    \Ensure{An estimated labeling $\widehat x_{T'} \colon T'\to \{1, 2\}$.} 
    \For{$u \in T'$}
        \[\widehat x_{T'}(u) = \argmax_{r\in \{1, 2\}}  \sum_{v\in T, v\sim u, \widehat x_T(v)=1} \log p_{1r}(y_{uv}).\]
    \EndFor
    \end{algorithmic}
\end{algorithm}

\begin{algorithm}
      \caption{Exact recovery for the 2-community GHCM} \label{alg:full}
      \begin{algorithmic}[1]
      \Require $G \sim \text{GHCM}(\lambda, n, \pi, P, d)$.
      \Ensure{ An estimated community labeling $\widetilde{x}: V \to Z$.}
      \vspace{5pt}
      \State{{\bf Phase I:}} 
      \State Take $\chi,\delta>0$ satisfying the conditions \eqref{eq:chi-formula} and \eqref{eq:delta-formula}. 
      \State Partition $\cS_{d,n}$ into blocks of volume $\chi\log n$. \label{line:partition}
      \State Select an initial block $B_{1}$.  \label{line:initial_select}
      \State Set $\varepsilon_0 \leq \min\{1/(2\log 2), \delta\}$. Sample $V_{0}\subset V_{1}$ such that $|V_{0}| = \varepsilon_0 \log n$. Set $V_{1}^\prime \leftarrow V_{1} \setminus V_{0}$. 
  \State{Label $V_{0}$ using Maximum a Posteriori estimation, i.e., set} \label{line:MAP} 
  \[\widehat x_{V_{0}} = \argmax_{x\colon V_{0}\to \{1, 2\}} \mathbb{P}(x^\star=x | G).\]
  \State{If $V_{1}^\prime \neq \emptyset$, apply \texttt{Propagate} (Algorithm \ref{alg:propagation}) on input $G, V_{0}, V_{1}^\prime$ to determine the labeling $\widehat{x}$ on $V_{1}^\prime$.} \label{line:prop_V1}
  \State Set the indices of \emph{active} and \emph{explored} blocks to be $\cA=\{1\}$ and $\cE = \emptyset$, respectively.  
  \While{$|\mathcal{A}| > 0$} \label{line:prop_begin} 
  \State Select an arbitrary $i \in \mathcal{A}$.
  \For{$j \in N(B_i) \setminus \cE$}
   \State Apply \texttt{Propagate} (Algorithm \ref{alg:propagation}) on input $G, V_{i}, V_{j}$ to determine the labeling $\widehat{x}$ on $V_{j}$.
   \If{$\widehat{x}$ labels at least $\frac{\delta \log n}{2}$ vertices as $\cstar$ in $V_j$}
   \State Append $j$ to $\cA$.
   \EndIf
  \EndFor
  \State Remove $i$ from $\cA$. Append $i$ to $\cE$.
  \EndWhile \label{line:prop_end}
     
  \For{$u \in V \setminus \left(\cup_{i \in \cE} V_i\right)$} \label{line:rem_start}
  \State Set $\widehat{x}(u) = 2$. \label{line:remain}
  \EndFor 
  \vspace{5pt}
  \State{{\bf Phase II:}}
  \For{$u\in V$}  
  \State \[\widetilde{x}(u) = \argmax_{i\in \{1, 2\}}\sum_{v\in V\setminus\{u\}, v\sim u} \log p_{i,\widehat x(v)} (y_{uv})\] \label{line:refine}
  \EndFor
  \end{algorithmic}
  \end{algorithm}

\section{Proof of Almost exact recovery}
\label{sec:phaseI}

\subsection{Proof of Proposition \ref{prop:cstar}}

\begin{proof}
We first show $\pi_1 \lambda \nu_d>1.$ We prove the inequality for the continuous case, as the discrete case is proven similarly. Since $P_{12}=P_{22}$, the CH divergence simplifies to 
    \begin{align*}
        D_+(\theta_1, \theta_2) &= \pi_1 \left(1 - \inf_{t \in [0,1]} \left\{  \int_{x \in \mathcal{X}} p_{11}(x)^t p_{12}(x)^{1-t} dx \right\} \right).
    \end{align*}
    Since the infimum is bounded above by 1, the condition $\lambda \nu_d \min_{i \neq j}  D_+(\theta_i, \theta_j) > 1$ immediately implies $\pi_1 \lambda \nu_d > 1.$

For a GHCM with rate $\lambda'$, \cite[Appendix C]{ghcm} defined the following constants 
\begin{align}
  &\nu_d \big(1 - 3\sqrt{d}\chi^{1/d}/2  \big)^d \ge \Big(\nu_d + \frac{1}{\lambda'} \Big)/2 \nonumber \\
  & 0<\chi < \Big( \nu_d- \frac{1}{\lambda'} \Big)/2, \label{eq:chi-formula} \\
  &R_d =1- \sqrt{d}\chi^{1/d}/2, \quad 0<\delta<  \frac{\widetilde{\delta}\chi}{\nu_dR_d} \label{eq:delta-formula}
  \end{align}
and showed the corresponding $(\chi, \delta)$-visibility graph is connected with probability $1-o(1)$ when $\lambda' \nu_d >1$, which holds whenever above the IT threshold. Since the $\cstar$ vertices are a Poisson process with intensity $\pi_1 \lambda$, replacing $\lambda'$ by $\pi_1 \lambda$ ensures that the resulting $\cstar$-visibility graph is connected.

\end{proof}

\subsection{Propagation}
Denote $N$ as the number of occupied blocks. Let $Z \in \N^{2\times N}$ be the count vector of the visibility graph, indicating the number of vertices of communities 1 and 2 in each block. While $Z$ is unobserved, it is convenient to condition on its realization. We introduce a set $S \subseteq \N^{2\times N}$ of ``nice'' count vectors $z$ which satisfy the following conditions:
\begin{enumerate}
    \item[(a)] the initial block has at least $\delta \log n$ vertices from $\cstar$.
    \item[(b)] $z$ induces connected $H$.
\end{enumerate}
Above the IT threshold, applying Lemma C.7 of \cite{ghcm} implies that the initial block formed by Line \ref{line:initial_select} is $\cstar$-occupied with high probability, satisfying Condition (a). Additionally, Proposition \ref{prop:cstar} ensures Condition (b) is satisfied. It follows that $\mathbb{P}(Z \in S) = 1-o(1)$.

Suppose that $Z \in S$, and all blocks which are included in the tree exploration are labeled perfectly. Since the tree exploration explores blocks with at least $\frac{\delta}{2} \log n$ vertices labeled as $C^{\star}$, Conditions (a) and (b) imply that all $C^{\star}$-occupied blocks will be visited by the tree exploration. In reality, the blocks are not perfectly labeled. However, if the algorithm makes at most $M$ mistakes on occupied blocks such that $\delta \log n - M \geq \frac{\delta}{2}\log n$, then we are guaranteed to include all occupied blocks in the tree. More formally, let $F_i$ be the event that the algorithm finishes building the tree before the $i^{\text{th}}$ visited block and let $A_i$ be the event that the $i^{\text{th}}$ visited block is ``successfully'' labeled, meaning that either (i) $F_i$ holds, and at most $\delta \log n$ vertices are mislabeled; or (ii) $F_i^c$ holds, and at most $M$ vertices are mislabeled. Let $A_0$ be the event that $V_0$ 
is labeled perfectly. The event $E_i \triangleq \left(\bigcap_{j=0}^i A_i \right) \cap \{Z \in S\} \cap F_i$ implies that the tree comprising the first $i$ visited blocks contains all blocks which are $C^{\star}$-occupied.

We desire a lower bound on $\bP(A_i | A_0, ... , A_{i-1}, Z = z)$. Observe $\bP(A_i | A_0, ... , A_{i-1}, Z = z, F_i) = 1$ because the $i^{\text{th}}$ block, which is $\cstar$-unoccupied, is labeled as being fully outside $C^{\star}$, incurring at most $\delta \log n$ mistakes. Then, by the law of total probability, we have that 
\begin{align}
    \bP(A_i | A_0, ... , A_{i-1}, Z = z) &\geq P(A_i | A_0, ... , A_{i-1}, Z = z, F_i^c), \nonumber
\end{align}
where the right-hand side is the success probability of labeling the $i^{\text{th}}$ block according to its parent in the tree. 

Next, we characterize the misclassification rate when labeling a single vertex in the propagation scheme. Denote $u_{ij}$ as the $j$-th vertex in the $i^{\text{th}}$ encountered block of the algorithm. 

\begin{lemma}
    \label{lem:misclass_rate} Suppose $ \lambda \nu_d \min_{i \neq j}  D_+(\theta_i, \theta_j) > 1$ and one of the following holds: (i) $
    \log(p_{12}(y)/p_{11}(y))<\eta$ for $ y \in \R$, or (ii) $P_{ij}$ are all Gaussians. Given $F_i^c, Z$ and the successful labeling history $A_1, \cdots, A_{i-1}$, there exist constants $a, b>0$ such that
    \[ \bP(\widehat{x}(u_{ij}) \neq x^\star(u_{ij}) | A_0, ... , A_{i-1}, Z = z, F_i^c)  \leq a  n^{-b}.  \]
\end{lemma}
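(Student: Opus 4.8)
The plan is to recognize that the propagation rule of Algorithm~\ref{alg:propagation} is a binary likelihood-ratio test and that the misclassification of a single vertex is a large-deviation event for the associated test statistic. Fix the realization $Z=z$, the conditioning events $A_0,\dots,A_{i-1},F_i^c$, and let $B_p$ denote the (unique) parent block from which $u_{ij}$ is labeled, with vertex set $V_p$. Since $B_p \sim B_i$ in the visibility graph, \emph{every} $v \in V_p$ is visible to $u_{ij}$, so the rule assigns $\widehat x(u_{ij}) = 1$ exactly when
\[
L \;:=\; \sum_{v \in V_p,\; \widehat x(v)=1} \log\frac{p_{11}(y_{u_{ij}v})}{p_{12}(y_{u_{ij}v})} \;\ge\; 0,
\]
and a misclassification is the event that $L$ has the wrong sign (ties on the measure-zero set $\{L=0\}$ are broken by convention). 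The first step is to set up $L$ and to exploit the asymmetry $P_{21}=P_{22}=P_{12}$, which makes both error cases tractable: when $x^\star(u_{ij})=2$, \emph{every} summand has law $P_{12}$ regardless of the true label of $v$, so all increments of $L$ have common negative mean $-D_{\mathrm{KL}}(P_{12}\|P_{11})$.

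The conceptual crux is a conditional-independence argument. Conditioned on the history $\{A_0,\dots,A_{i-1},Z=z,F_i^c\}$ and on the realized set $\{v\in V_p:\widehat x(v)=1\}$ together with the true labels on $V_p$, the edge observations $\{y_{u_{ij}v}\}_{v\in V_p}$ are mutually independent with $y_{u_{ij}v}\sim P_{x^\star(u_{ij}),x^\star(v)}$. The reason is that in the tree exploration $B_p$ was itself labeled using only its edges to \emph{its} parent (the grandparent of $u_{ij}$), and block $B_i$ was unlabeled until now; hence the edges between $V_p$ and $u_{ij}$ are examined here for the first time and do not enter the $\sigma$-algebra generated by the conditioning events. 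The parent--child structure and the estimated labeling on $V_p$ are functions of strictly earlier edges, so conditioning on them leaves the current edges unbiased. I would state this as the key measurability observation underlying the whole propagation analysis.

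Next I would compute the conditional drift of $L$ and apply a Chernoff bound. Under $A_{i-1}$ the parent is $\cstar$-occupied and successfully labeled, so it contributes at least $\delta\log n - M = \Theta(\log n)$ correctly identified $\cstar$-neighbors and at most $M$ mislabeled ones. In the case $x^\star(u_{ij})=1$ the correct neighbors each contribute mean $D_{\mathrm{KL}}(P_{11}\|P_{12})>0$ while the $\le M$ mislabeled neighbors each contribute $-D_{\mathrm{KL}}(P_{12}\|P_{11})$; provided $M$ is taken a sufficiently small fraction of $\log n$ (consistent with the block-level induction, since the per-vertex bound we are proving forces the actual number of mistakes per block to be $o(\log n)$ with high probability), the signal term dominates and $\mathbb{E}[L]\ge c\log n$ for a constant $c>0$. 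The case $x^\star(u_{ij})=2$ is symmetric with $\mathbb{E}[L]\le -c\log n$ over $\ge \tfrac{\delta}{2}\log n$ terms. A misclassification thus requires a deviation of order $\log n$ of a sum of $\Theta(\log n)$ independent, mean-shifted increments, so an exponential tail bound yields $\bP(\widehat x(u_{ij})\neq x^\star(u_{ij})\mid \cdot)\le \exp(-b'\log n)=an^{-b}$. The main technical obstacle is controlling the moment generating function of the increments in the two regimes: under hypothesis~(i) the log-likelihood ratio is bounded, so Hoeffding/Bernstein applies directly; under hypothesis~(ii) the increments are unbounded, and I would instead use the explicit Gaussian moment generating function—affine (or at worst quadratic) in the Gaussian observation—to obtain the same exponential decay, mirroring the argument in \cite[Appendix G]{ghcm}. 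Verifying conditional independence under the data-driven exploration and securing positive drift despite mislabeled parent neighbors are the two points requiring the most care.
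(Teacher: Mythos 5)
Your proposal is correct and follows essentially the same route as the paper: condition on the history so that the edges from $u_{ij}$ to the parent block are fresh and independent, note that $A_{i-1}$ and the activation rule guarantee at least $\tfrac{\delta}{2}\log n - M$ correctly labeled $\cstar$-neighbors and at most $M$ mislabeled ones, and apply an exponential (Chernoff-type) bound to the log-likelihood-ratio sum, handling case (i) via the bounded ratio and case (ii) via the Gaussian moment generating function. The only differences are cosmetic: the paper invokes the Chernoff--Hellinger coefficient $\phi=\sup_t\int p_{11}^t p_{12}^{1-t}$ directly (via an adaptation of Lemma D.7 of \cite{ghcm}) rather than a mean-plus-concentration argument, takes $M$ to be a constant rather than a small fraction of $\log n$, and the relevant guarantee on the parent is $\tfrac{\delta}{2}\log n$ labeled-as-$\cstar$ vertices (from the activation rule), not $\delta\log n$.
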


\begin{proof}
Denote $\bP_{i, z}(\cdot) = \bP(\cdot \mid A_0, ... , A_{i-1}, Z = z, F_i^c)$ and $\widehat C_i$ as the set of vertices labeled as $\cstar$ in the $i^{\text{th}}$ encountered block. Under assumption (i), a straightforward adaptation of \cite[Lemma D.7]{ghcm} gives that 
\begin{align}
\bP_{i, z}(\widehat{x}(u_{ij}) \neq x^\star(u_{ij})) 
&\leq \phi^{\frac{\delta}{2} \log n - M} \cdot e^{\eta M} \nonumber \\
&=\left( \frac{e^\eta}{\phi}\right)^M n^{-\delta \log(1 / \phi) / 2}. \label{eq:off_lrt}
\end{align}
where $\phi = \sup_{t\in (0, 1)} \int p_{11}(y)^t p_{12}(y)^{1-t} dy$. Note that $\phi < 1$ since $p_{11} \neq p_{12}$.

Setting $a=(e^\eta / \phi)^M, b=\delta \log(1 / \phi) / 2$ yields the desired result. Under assumption (ii), observe that $\log(p_{12}(y_{u_{ij}v})/p_{11}(y_{u_{ij}v}))$ is a Gaussian and hence the expectation is the moment-generating function of a Gaussian that can be bounded following similar derivations in \cite[Appendix G]{ghcm} to yield a similar bound as \eqref{eq:off_lrt}.
\end{proof}

Next, we lower bound $\bP(A_i | A_0, ... , A_{i-1}, Z = z, F_i^c)$. By \cite[Lemma D.1]{ghcm}, there exists $\Delta$ such that $|V_i| < \Delta \log n$ for all $i \in V^\dagger$ with high probability. For a fixed $i$ and conditioned on $|\widehat{C}_{p(i)} \cap C^{\star}|$ and $|\widehat{C}_{p(i)} \setminus C^{\star}|$, the events that $u_{ij}$ is misclassified by the algorithm are mutually independent given $A_1, \cdots, A_{i-1}, F_i^c$ and $Z$. By Lemma \ref{lem:misclass_rate}, the total number of errors on $V_i$ is stochastically dominated by $X \sim \text{Bin}(\Delta \log n, an^{-b})$ for some $a, b.$ 
Denoting $\mu=\E[X]$, a Chernoff bound yields
\begin{align*}
    \bP_{i, z}(A_i^c ) &= \bP_{i, z}(\lvert u_{ij}\in V_i: \widehat x(u_{ij})\neq x^\star(u_{ij}) \rvert > M) \\
    &\leq \bP(X > M) \\
    &= \bP(X - \mu > (M/ \mu - 1)\mu) \\
    &\leq e^{M - \mu} (\mu / M)^M \leq e^{M } (\mu / M)^M\\
    &\leq (e\Delta a / M)^M (\log n)^M n^{-bM}.
\end{align*}
Set $M=5/(4b)$ and define $\gamma = (e\Delta a / M)^M$. For large enough $n$, we have $(\log n)^M < n^{1/8}$, yielding
\begin{align}
    &\bP(A_i | A_0, ... , A_{i-1}, Z = z) \geq \bP_{i, z}(A_i) \geq 1- \gamma n^{-9/8}. \nonumber
\end{align}

We will now prove almost exact recovery of Phase I. By \cite[Theorem D.2]{ghcm}, $A_0$ holds with high probability. Then,
\begin{align}
    &\bP(\bigcap_{i\in [N]} A_i \mid Z=z) \nonumber \\
    &\geq \bP(A_0 \mid Z=z) \cdot \prod_{i=1}^{N}\bP(A_i \mid A_0, \dots, A_{i-1}, Z=z) \nonumber \\
    &\geq (1-o(1)) \Big( 1- \gamma n^{-9/8} \Big)^{\frac{n}{\chi \log n}} \nonumber \\
    &\geq (1-o(1)) \Big( 1- \frac{\gamma n^{-1/8}}{\chi \log n} \Big), \nonumber
\end{align}
where the last inequality is due to Bernoulli's inequality. The bound is uniform over all $z\in S$, with $\{Z\in S\}$ occurring with high probability. Since $M < \delta \log n$ for $n$ large enough and there are $n/\chi \log n$ blocks in total, $\widehat x$ makes fewer than $\delta \log n \cdot n/\chi \log n = \delta n/ \chi$ mistakes in total, yielding almost exact recovery.

\section{Conclusion}
\label{sec:conclusion}

In this paper, we present a linear-time algorithm for exact recovery in the GHCM with two communities. We partially resolve the conjecture of \cite{ghcm} by showing that efficient exact recovery is achievable above the IT threshold of $\lambda \nu_d \min_{i \neq j}  D_+(\theta_i, \theta_j) = 1$ in two communities. Our algorithm extends achievability to well-known inference problems such as the planted dense subgraph problem and submatrix localization. 

A natural question is whether our approach can be extended to the $k$-community GHCM with $k \geq 3$. One can show that for two communities $i\neq j$, the subgraph of the GHCM consisting of communities $\{l: P_{il} \neq P_{jl}\}$ that yield distinct distributions to $i$ and $j$ form a connected visibility graph whenever above the IT threshold. Such a visibility graph would be useful to distinguish Communities $i$ and $j$, but how can we create it? 
When $k=2$, the community $C^{\star}$ is enough to distinguish the communities, while when $k \geq 3$, there can be logical interdependencies in how communities distinguish each other. Therefore, new algorithmic ideas are needed to handle the $k \geq 3$ case.

\paragraph*{Acknowledgements.}
J.G. and C.K.G. were partially supported by NSF CCF 2154100, and C.K.G. was partially supported through NSF HDR TRIPODS (IDEAL). We thank Xiaochun Niu and Ermin Wei for helpful discussions.

\bibliographystyle{plain} 
\bibliography{refs} 

\begin{thebibliography}{1}

\bibitem{Avrachenkov2021}
Konstantin Avrachenkov, Andrei Bobu, and Maximilien Dreveton.
\newblock Higher-order spectral clustering for geometric graphs.
\newblock {\em Journal of Fourier Analysis and Applications}, 27(2):22, 2021.

\bibitem{Galhotra2018}
Sainyam Galhotra, Arya Mazumdar, Soumyabrata Pal, and Barna Saha.
\newblock The geometric block model.
\newblock In {\em Proceedings of the AAAI Conference on Artificial Intelligence}, volume~32, 2018.

\bibitem{ghcm}
Julia Gaudio, Charlie Guan, Xiaochun Niu, and Ermin Wei.
\newblock Exact label recovery in euclidean random graphs, 2024.

\bibitem{Gaudio2023}
Julia Gaudio, Xiaochun Niu, and Ermin Wei.
\newblock Exact community recovery in the geometric sbm.
\newblock {\em Symposium on Discrete Algorithms (SODA)}, 2024.

\bibitem{Holland1983}
Paul~W Holland, Kathryn~Blackmond Laskey, and Samuel Leinhardt.
\newblock Stochastic blockmodels: First steps.
\newblock {\em Social Networks}, 5(2):109--137, 1983.

\bibitem{Penrose1997}
Mathew~D. Penrose.
\newblock {The longest edge of the random minimal spanning tree}.
\newblock {\em The Annals of Applied Probability}, 7(2):340 -- 361, 1997.

\bibitem{rapoport1953spread}
Anatol Rapoport.
\newblock Spread of information through a population with socio-structural bias: {I. Assumption} of transitivity.
\newblock {\em The Bulletin of Mathematical Biophysics}, 15:523--533, 1953.

\bibitem{Sankararaman2018}
Abishek Sankararaman and Fran{\c{c}}ois Baccelli.
\newblock Community detection on {Euclidean} random graphs.
\newblock In {\em Proceedings of the Twenty-Ninth Annual ACM-SIAM Symposium on Discrete Algorithms}, pages 2181--2200. SIAM, 2018.

\end{thebibliography}

\end{document}